\newcommand{\R}{{\mathord{\mathbb R}}}
\newcommand{\N}{{\mathord{\mathbb N}}}
\newcommand{\C}{{\mathord{\mathbb C}}}
\def\chib {\overline{\chi}}
\newcommand{\HH}{\mathcal{H}}
\newcommand{\FF}{\mathcal{F}}
\newcommand{\WW}{\mathcal{W}}
\newcommand{\UU}{\mathcal{U}}
\newcommand{\ben}{\begin{displaymath}}
\newcommand{\een}{\end{displaymath}}
\newcommand{\beqn}{\begin{equation}}
\newcommand{\eeqn}{\end{equation}}
\newcommand{\beqna}{\begin{eqnarray*}}
\newcommand{\eeqna}{\end{eqnarray*}}
\def\supp{\operatorname{supp}}
\newtheorem{lemma}{Lemma}
\newtheorem{theorem}[lemma]{Theorem}
\numberwithin{equation}{section}
\numberwithin{lemma}{section}
\begin{document}
\title{
Uniqueness of the ground state in the Feshbach renormalization analysis}
\author{\vspace{5pt} D. Hasler$^1$\footnote{
E-mail: dghasler@wm.edu, on leave from Ludwig Maximilians University, Munich.}
 and I.
Herbst$^2$\footnote{E-mail: iwh@virginia.edu.} \\
\vspace{-4pt} \small{$1.$ Department of Mathematics,
College of William \& Mary } \\ \small{Williamsburg, VA }\\
\vspace{-4pt}
\small{$2.$ Department of Mathematics, University of Virginia,} \\
\small{Charlottesville, VA, USA}\\}
\date{}
\maketitle

\begin{abstract}
In the operator theoretic renormalization analysis introduced by Bach,  Fr\"ohlich, and Sigal we prove uniqueness of the ground state.
\end{abstract}

\textbf{Mathematics Subject Classification (2010)}. 81T17 \\

\textbf{keywords}.  Bach-Fr\"ohlich-Sigal, renormalization group, ground state, uniqueness

\section{Introduction}
\label{sec:intro}

In 1998 Bach, Fr\"ohlich, and Sigal (BFS) introduced an operator theoretic renormalization group scheme  \cite{BFS98} to analyze certain field theoretic models.  This scheme was applied in \cite{BFS98} and refined by Bach, Chen, Fr\"ohlich, and Sigal \cite{BCFS03} (see also the work of Griesemer and Hasler \cite{GH08}).  More recently the renormalization group analysis was applied in \cite {GH09, S07, C1, HH10-1, HH10-2, HH10-3, FGS}.   If there is a positivity improving representation of the semigroup generated by the Hamiltonian then a ground state can often be shown to be unique for arbitrary coupling constant (if it exists). There are other methods (for example \cite{BFS98-2}, \cite{S04}) that have been used to show that in some models with regular infrared behavior or minimal coupling, the multiplicity of the ground state is the same as for the electronic Hamiltonian for small coupling. We include the usual model of non-relativistic quantum electrodynamics
(QED) as a model with regular infrared behavior (at least for its behavior near the ground state energy). This occurs because of an approximate gauge invariance.  In the present paper we show that for small coupling, and under rather weak hypotheses, the renormalization group analysis leads to uniqueness of the ground state if the electronic Hamiltonian has a unique ground state even in certain QED models without any gauge invariance.

In Section~\ref{sec:abst} we state the main result in Theorem  \ref{thm:abstract}. In Section~\ref{sec:app} we
apply Theorem~\ref{thm:abstract} to explicit models of non-relativistic QED. In the Appendix we apply
Theorem~\ref{thm:abstract} to operator theoretic renormalization as defined in \cite{BCFS03}.

\section{An Abstract Result}
\label{sec:abst}
We give the necessary hypotheses for an abstract result and then specialize to get to the applications in the next section.  A further explicit application based directly on \cite{BCFS03} is in an Appendix.
Thus suppose $\mathcal H$ is a separable Hilbert space and $T_0$ a self-adjoint operator in $\mathcal H$  with spectrum $\sigma(T_0) = [0,\infty)$ such that $T_0$ has an eigenvalue at $0$ of multiplicity $m$ (thus $m >0$).
Suppose $\{\Gamma_t\}_{t>0}$ is a group of unitary scale transformations on $\mathcal H$ satisfying $\Gamma_s\Gamma_u = \Gamma_{su}$ and that $T_0$ is such that $\Gamma_s T_0 \Gamma_s^{-1}= sT_0$.  We define $\mathcal H_{\rm red}: = 1_{[0,1]}(T_0)\mathcal H$.

We  assume we have a sequence of bounded operators $\{H_n\}_{n=1}^{\infty}$ on $\mathcal H_{\rm red}$, such that  the members of the sequence $\{H_n\}_{n=1}^{\infty}$
are related by the fact that $H_{n+1}$ is up to a constant multiple and a scale transformation a Feshbach transform of $H_n$, that is we assume that  \eqref{eqn:renorm} below holds.
To state this more precisely  we need some further notation.  Suppose $\chi$ and $\chib$ are non-negative Borel functions on ${\R}$ satisfying $\chi^2 + \chib^2 = 1$ with $\chi(x) = 1$ for $x\in [0,a]$ for some $a \in (0,1]$ and $\chi(x) = 0$ for $x\geq 1$.  We assume that $\chi$ is non-increasing. We write $H_n = T_n + W_n$  where both $T_n$ and $W_n$ are bounded  operators on $\mathcal H_{\rm red}$. We assume $T_n$ commutes with $T_0$.  For any $t > 0$ let $\chi_t(x) = \chi(t^{-1}x)$ and similarly for $\chib_t$.  Choose $0 < \rho < a$.  Define
\begin{equation} \label{eqn:1}
\overline H_n: = (T_n + \chib_{\rho}(T_0)W_n\chib_{\rho}(T_0))|_{\textrm {Ran}\chib_{\rho}(T_0)\bigcap\mathcal H_{\rm red}}.
\end{equation}
We assume that $\overline H_n$ and $T_n$ are invertible on Ran$\chib_{\rho}(T_0)\bigcap\mathcal H_{\rm red}$ with bounded inverses and define
\begin{equation}\label{eqn:2}
F_n: = T_n + \chi_{\rho}W_n\chi_{\rho} - \chi_{\rho}W_n\chib_{\rho}(\overline H_n)^{-1}\chib_{\rho}W_n\chi_{\rho}
\end{equation}
where here the arguments of $\chi_{\rho}$ and $\chib_{\rho}$ are both $T_0$. Then we assume
\begin{equation} \label{eqn:renorm}
H_{n+1} = \rho^{-1}\Gamma_{\rho}F_n\Gamma_{\rho}^{-1} .
\end{equation}
This equality is on the subspace $\mathcal H_{\rm red}$.

Note that $\Gamma_{\rho}^{-1}: \mathcal H_{\rm red} \to 1_{[0,\rho]}(T_0)\mathcal H$ and that $F_n$ preserves the latter space.  Finally $\Gamma_{\rho}:1_{[0,\rho]}(T_0) \mathcal H \to \mathcal H_{\rm red}$.
A crucial component of our analysis is the fact shown in \cite{GH08} that in our context $\chi_{\rho}(T_0)$ is an isomorphism of ker$H_n$ onto ker$F_n$ and thus $\Gamma_{\rho}\chi_{\rho}(T_0)$ is an isomorphism of ker$H_n$ onto ker$H_{n+1}$. (It was shown in \cite{BCFS03} that $\chi_{\rho}(T_0)$ is injective as a map from  ker$H_n \to$ {ker}$F_n$ and this is all we will use below.)  It follows that $(\Gamma_{\rho}\chi_{\rho}(T_0))^n$ is an isomorphism of ker$H_1$ onto ker$H_{n+1}$.

\begin{theorem} \label{thm:abstract}

Given the hypotheses
 and definitions above, we make the following additional assumptions.  There exist $ \delta_0 > 0$ and  $\{a_n\}_{n=1}^{\infty} \in \ell^2(\N)$,  such that

\begin{itemize}
\item[(a)] $||T_n \psi|| \geq \delta_0||T_0 \psi||  - a_n||\psi||$ for all $\psi \in \mathcal H_{\rm red}$,
\item[(b)] $||W_n|| \leq a_n$.
\end{itemize}
Then $ { \rm ker} \, H_1$ is at most $m$ dimensional.

\end{theorem}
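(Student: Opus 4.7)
The plan is to proceed by contradiction. Suppose $\dim\ker H_1\geq m+1$, and write $P_0 := 1_{\{0\}}(T_0)$ for the orthogonal projection onto $\ker T_0$. Since $\dim\ker T_0 = m$, the linear map $P_0:\ker H_1\to\ker T_0$ must have a non-trivial kernel, so one can pick a unit vector $\psi\in\ker H_1$ with $P_0\psi=0$. Its iterates $\phi^{(n)} := (\Gamma_\rho\chi_\rho(T_0))^n\psi$ lie in $\ker H_{n+1}$ and are non-zero by the injectivity recorded just after \eqref{eqn:renorm}. Setting $r_n := \|\phi^{(n)}\|^2$, I will derive a contradiction from two incompatible estimates $r_n\to 0$ and $r_n\geq c>0$.

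For the upper bound, the relation $\Gamma_\rho T_0\Gamma_\rho^{-1} = \rho T_0$ together with the functional calculus gives $\Gamma_\rho f(T_0) = f(\rho T_0)\Gamma_\rho$ for every Borel $f$; in particular $\Gamma_\rho\chi_\rho(T_0) = \chi(T_0)\Gamma_\rho$. Iterating and commuting each $\Gamma_\rho$ past the subsequent $\chi$'s yields
\[
(\Gamma_\rho\chi_\rho(T_0))^n = \chi(T_0)\chi(\rho T_0)\cdots\chi(\rho^{n-1}T_0)\,\Gamma_\rho^n.
\]
For $\lambda\geq 1$ the factor $\chi(\lambda)$ vanishes, while for $\lambda<1$ and $k\geq 1$ one has $\rho^k\lambda<\rho<a$, so $\chi(\rho^k\lambda)=1$; the product therefore telescopes to $\chi(T_0)\Gamma_\rho^n$, and consequently
\[
r_n = \langle\psi,\chi(T_0/\rho^n)^2\psi\rangle = \int_0^1 \chi(\lambda/\rho^n)^2\,d\mu_\psi(\lambda),
\]
where $\mu_\psi$ is the $T_0$-spectral measure of $\psi$. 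Since $\mu_\psi(\{0\})=\|P_0\psi\|^2=0$ and $\chi(\lambda/\rho^n)\to 0$ for every $\lambda>0$, dominated convergence forces $r_n\to 0$.

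For the lower bound, $\phi^{(n)}\in\ker H_{n+1}$ gives $T_{n+1}\phi^{(n)} = -W_{n+1}\phi^{(n)}$, so (b) and (a) combine to $\|T_0\phi^{(n)}\|\leq(2a_{n+1}/\delta_0)\|\phi^{(n)}\|$. Using $\phi^{(n+1)} = \Gamma_\rho\chi_\rho(T_0)\phi^{(n)}$ and the unitarity of $\Gamma_\rho$,
\[
r_{n+1} = \|\chi_\rho(T_0)\phi^{(n)}\|^2 \geq \|1_{[0,a\rho]}(T_0)\phi^{(n)}\|^2 = r_n - \|1_{(a\rho,1]}(T_0)\phi^{(n)}\|^2,
\]
and Chebyshev together with the $T_0$-bound majorizes the last term by $(a\rho)^{-2}\|T_0\phi^{(n)}\|^2\leq\alpha_n r_n$, where $\alpha_n := 4a_{n+1}^2/(a\rho\delta_0)^2$. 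Hence $r_{n+1}\geq r_n(1-\alpha_n)$. Since $\{a_n\}\in\ell^2(\N)$ implies $\sum_n\alpha_n<\infty$, for $n\geq n_0$ large one has $\alpha_n<1$ and $L:=\prod_{n\geq n_0}(1-\alpha_n)>0$; thus $r_n\geq L\,r_{n_0}>0$ for every $n\geq n_0$, contradicting $r_n\to 0$ and forcing $\dim\ker H_1\leq m$.

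The main obstacle is that the one-step estimate $\|T_0\phi^{(n)}\|\leq(2a_{n+1}/\delta_0)\|\phi^{(n)}\|$ is on its own too weak --- it is entirely consistent with $\|\phi^{(n)}\|$ decaying to zero, which is precisely what $P_0\psi=0$ forces. The $\ell^2$-summability of $\{a_n\}$ rescues the argument by promoting the shrinking per-step losses into a convergent infinite product, producing a \emph{uniform} positive lower bound on $r_n$ that is incompatible with the dominated-convergence decay.
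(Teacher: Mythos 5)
Your proof is correct, and its core ingredients coincide with the paper's: the same choice of a kernel vector $\psi$ with $1_{\{0\}}(T_0)\psi=0$, the same identity $(\Gamma_\rho\chi_\rho(T_0))^n=\Gamma_{\rho^n}\chi_{\rho^n}(T_0)$ (your telescoped form $\chi(T_0)\Gamma_{\rho^n}$ is the same operator), the same estimate $\|T_0\phi^{(n)}\|\le 2\delta_0^{-1}a_{n+1}\|\phi^{(n)}\|$ extracted from hypotheses (a) and (b) on $\ker H_{n+1}$, and Chebyshev's inequality to convert that into spectral localization near $0$. Where you genuinely diverge is in how the contradiction is assembled. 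The paper proves the single operator inequality $1_{\{0\}}(T_0)+\sum_{j\ge n}\bigl(1_{[a\rho^{j+1},\infty)}(T_0)\chi_{\rho^j}(T_0)\bigr)^2\ge\chi_{\rho^n}(T_0)^2$, sums the Chebyshev losses over all scales $j\ge n$ at once (using the monotonicity of $\chi$ to compare $\|\chi_{\rho^j}(T_0)\psi\|$ with $\|\chi_{\rho^n}(T_0)\psi\|$), and lands directly on $(a\rho)^2 r_n\le d_n r_n$ with $d_n\to0$, which forces $\phi^{(n)}=0$ for large $n$ and contradicts the injectivity statement; no separate decay argument is needed. You instead split the contradiction into two independent halves: a soft dominated-convergence argument showing $r_n\to0$, which uses only $1_{\{0\}}(T_0)\psi=0$ and none of the quantitative hypotheses, and a one-step multiplicative recursion $r_{n+1}\ge(1-\alpha_n)r_n$ whose infinite product converges precisely because $\{a_n\}\in\ell^2$, giving a uniform positive lower bound from $r_{n_0}>0$. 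Both routes are valid; the paper's is slightly more economical (one summed inequality in place of two limiting arguments), while yours makes the mechanism more transparent --- in particular it cleanly separates the role of $1_{\{0\}}(T_0)\psi=0$ (forcing decay) from the role of $\ell^2$-summability (preventing decay), which is exactly the tension your closing remark identifies.
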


\begin{proof}  Suppose $ \rm{ker}$$H_1$ is at least $m+1$ dimensional.  Then there exists a non-zero vector $\psi_1$  $\in$ $\rm{ker}$$H_1$ so that $1_{\{0\}}(T_0)\psi_1 = 0$.   It follows that $(\Gamma_{\rho}\chi_{\rho}(T_0))^n\psi_1$ is a non-zero vector in $\rm{ker}$$H_{n+1}$.  Note that because $\rho < a$, $\chi_{\rho}\chi =\chi_{\rho}$.  This gives $(\Gamma_{\rho}\chi_{\rho}(T_0))^n =\Gamma_{\rho^n}\chi_{\rho^n}(T_0)$.  Now it is readily verified that
\begin{equation}
1_{\{0\}}(T_0) + \sum_{j=n}^{\infty}(1_{[a\rho^{j+1},\infty)}(T_0)\chi_{\rho^j}(T_0))^2 \geq \chi_{\rho^n}(T_0)^2.
\end{equation}
Thus
\begin{align} \label{ineq:T_01}
&(a\rho)^2 ||\Gamma_{\rho^n}\chi_{\rho^n}(T_0)\psi_1||^2 =(a\rho)^2 ||\chi_{\rho^n}(T_0)\psi_1||^2 \notag \\
& \leq (a\rho)^2\sum_{j=n}^{\infty}||1_{[a\rho^{j+1},\infty)}(T_0)\chi_{\rho^j}(T_0)\psi_1||^2 \notag \\
& \leq \sum_{j=n}^{\infty}\rho^{-2j}||T_01_{[a\rho^{j+1},\infty)}(T_0)\chi_{\rho^j}(T_0)\psi_1||^2 \notag \\
& \leq \sum_{j=n}^{\infty}\rho^{-2j}||T_0\chi_{\rho^j}(T_0)\psi_1||^2 \notag \\
&=\sum_{j=n}^{\infty}\rho^{-2j}||\Gamma_{\rho^j}T_0\chi_{\rho^j}(T_0)\psi_1||^2 \notag\\
&=\sum_{j=n}^{\infty}||T_0\Gamma_{\rho^j}\chi_{\rho^j}(T_0)\psi_1||^2.
\end{align}

By hypothesis
\begin{equation}
0=||H_{j+1}\Gamma_{\rho^j}\chi_{\rho^j}(T_0)\psi_1|| \geq \delta_0||T_0\Gamma_{\rho^j}\chi_{\rho^j}(T_0)\psi_1|| - 2a_{j+1}||\Gamma_{\rho^j}\chi_{\rho^j}(T_0)\psi_1||. \notag
\end{equation}
Thus
\begin{equation} \label{ineq:T_02}
||T_0\Gamma_{\rho^j}\chi_{\rho^j}(T_0)\psi_1|| \leq 2\delta_0^{-1}a_{j+1}||\Gamma_{\rho^j}\chi_{\rho^j}(T_0)\psi_1||.
\end{equation}
By the monotonicity of $\chi$ if $j\geq n$
\begin{equation} \notag
||\Gamma_{\rho^j}\chi_{\rho^j}(T_0)\psi_1|| = ||\chi_{\rho^j}(T_0)\psi_1|| \leq ||\chi_{\rho^n}(T_0)\psi_1|| = ||\Gamma_{\rho^n}\chi_{\rho^n}(T_0)\psi_1||.
\end{equation}
Thus by (\ref{ineq:T_01}) and (\ref{ineq:T_02})
\begin{equation} \notag
(a\rho)^2 ||\Gamma_{\rho^n}\chi_{\rho^n}(T_0)\psi_1||^2 \leq \sum_{j=n}^{\infty}(2\delta_0^{-1}a_{j+1})^2||\Gamma_{\rho^n}\chi_{\rho^n}(T_0)\psi_1||^2 = d_n||\Gamma_{\rho^n}\chi_{\rho^n}(T_0)\psi_1||^2
\end{equation}
where $d_n: = (2\delta_0^{-1})^2\sum_{j=n}^{\infty}|a_{j+1}|^2 \to 0$ as $n \to \infty$.  Thus for large enough $n,  \Gamma_{\rho^n}\chi_{\rho^n}(T_0)\psi_1 =0$, a contradiction.
\end{proof}

\section{Applications}
\label{sec:app}
In the paper \cite{HH10-1} a version of the BFS renormalization group approach \cite{BCFS03} is developed to prove existence of ground states in certain models.  In many models, after one or two initial Feshbach transformations, one arrives at the situation treated in \cite{HH10-1}, Sections 6--10.  From this point on the presentation applies to any model satisfying the hypotheses.  We used the method in the spin-boson model \cite{HH10-1} and in a model of non-relativistic QED (see \cite{HH10-2}, \cite{HH10-3}) to prove existence of a ground state and analyze some of its properties.

In fact in both of the above models uniqueness is known by methods other than the one presented in this paper.  In the spin-boson model one has a representation where $e^{-tH}$ is positivity improving.  In our QED model this cannot work in the multi-electron case since we assumed that the electrons were fermions.  However, there is a simple method due to Hiroshima and Spohn \cite{HS02} (or see Section 15.3 of \cite{S04}) which gives uniqueness if the Hamiltonian of the model is infrared regular, and in minimally coupled QED, this can be accomplished by a so called Pauli-Fierz gauge transformation.  In addition the method of \cite{BFS98-2} works for minimally coupled QED without any infrared regularization.  We would like to point out that there are models where the latter methods will not work but which have unique ground states nevertheless (which can be seen by applying Theorem \ref{thm:abstract} as we do below).  In particular if in our QED model Hamiltonian, given in \eqref{eq:hamilton1} below, one drops the $(A_{\Lambda}(x_j))^2$ terms to obtain a dipole approximation these methods apparently do not give uniqueness but as mentioned in \cite{HH10-2} the proof given there works with no essential change for this model to give existence of a ground state.  (The main reason that we do not need a Pauli-Fierz transformation to regularize the Hamiltonian in the infra-red is because of the absence of terms of the form $\int w[H_f;k]a(k,\lambda) dk$ and their adjoints in the renormalization group iteration procedure.  Of course in models where these terms have good infrared behavior, the method also works (for example see \cite{S07})).

We consider the following models and then apply Theorem \ref{thm:abstract}.  Define the self adjoint operators
\begin{align}
H^{(1)} =& \sum_{n=1}^N (p_j + gA_{\Lambda}(\beta x_j))^2 + V(x_1,\cdots,x_N) +H_f  \label{eq:hamilton1}   \\
H^{(2)} =& \sum_{n=1}^N (p_j^2 + 2g p_j\cdot A_{\Lambda}(\beta x_j)) + V(x_1,\cdots,x_N) +H_f \nonumber  \\
H^{(3)} =& (\sigma_z +1)\otimes I + I \otimes H_f + g \sigma_x\otimes\phi(f) \nonumber
\end{align}
The operators $H^{(1)}$ and $H^{(2)}$ act in the Hilbert space $\mathcal H= L^2_a(\R^{3N};\mathcal F)$ and $\beta , g \in \R$.  Here $\mathcal F$ is the Fock space for transversally polarized photons and the $a$ indicates anti-symmetry in the $N$ electron coordinates, $x_1,\cdots,x_N$.  The potential $V$ is assumed to be invariant under rotations and permutations of the co-ordinates and Kato small with respect to the electron kinetic energy $\sum_{j=1}^N p_j^2$ where $p_j = -i\nabla_{x_j}$.  By $A_\Lambda(x)$ we denote the quantized vector potential at position $x$  \cite{S04,HH10-2}.
The subscript $\Lambda$  indicates a rotation invariant ultraviolet cut-off.  We do not impose an infrared regularization.  The operator $H_f$ is the kinetic energy of the photons each of which has dispersion relation $\omega(k) = |k|$.  See \cite{HH10-2} and \cite{HH10-3} for details of the model.  It is known that these models have ground states (at least for small $g$).  For $H^{(1)}$ (for all $g$) see \cite{GLL01} and references given there.  Although only $H^{(1)}$ is explicitly treated in \cite{HH10-2} the results given there also hold for $H^{(2)}$ by essentially the same but a slightly simpler proof.
For the spin boson Hamiltonian, $H^{(3)}$, see \cite{HH10-1}. This Hamiltonian acts in $\C^2\otimes \mathcal F$. Here  $\FF$ is
the Fock space over the square integrable functions on $\R^3$.
The operator $ \phi(f) = (a^*(f/\sqrt\omega)) + (a^*(f/\sqrt\omega))^*$ where $f$ is in $L^{\infty}(\R^3) \cap L^2(\R^3)$ and $a^*(\cdot)$ is the usual creation operator.
By  $\sigma_z$ and $\sigma_x$ we denote  the  Pauli matrices. Let $H_{\rm atom} = \sum_{j=1}^N p_j^2 + V(x_1,\cdots,x_N)$ in $L^2_a(\R^{3N})$.   Then  we have the following result.
\begin{theorem}
Suppose $H=H^{(1)}$, $H=H^{(2)}$, or $H=H^{(3)}$ with the definitions given above.  Suppose $H_{\rm atom}$ has a non-degenerate ground state in $L^2_a(\R^{3N})$.  Then there exists $g_0 > 0$ so that if $g \in [-g_0,g_0]$, H has a non-degenerate ground state.
\end{theorem}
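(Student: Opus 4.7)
The plan is to apply Theorem \ref{thm:abstract} with $m = 1$; existence of ground states for sufficiently small $|g|$ is already established (see \cite{GLL01} for $H^{(1)}$ at arbitrary coupling, and \cite{HH10-2, HH10-3, HH10-1} for the small coupling regime of all three Hamiltonians), so the only issue is to rule out degeneracy. For each Hamiltonian I would first subtract the ground state energy $E_0$ and then perform one or two initial smooth Feshbach transformations centered on the projection $P_{\rm at} \otimes 1_{[0,1]}(H_f)$, where $P_{\rm at}$ projects onto the atomic ground state (replaced by the projection onto the lower eigenvector of $\sigma_z + 1$ in the spin-boson case). The non-degeneracy hypothesis on $H_{\rm atom}$ makes the range of this projection one-dimensional, so on the reduced Hilbert space $\mathcal H_{\rm red} = 1_{[0,1]}(T_0)\mathcal H$ with $T_0 = H_f$ the multiplicity of the eigenvalue $0$ of $T_0$ equals $m = 1$.

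From here the renormalization group iteration of \cite{BCFS03, HH10-1} produces the required sequence $\{H_n\}$. Hypothesis (b) is the familiar contraction estimate: in the integral-kernel norm of \cite{BCFS03} (used in Sections 6--10 of \cite{HH10-1}), the off-diagonal part $W_n$ satisfies $\|W_n\| \leq C \xi^n$ for some $\xi \in (0,1)$ whenever $g_0$ is chosen small enough, and $C\xi^n$ is trivially in $\ell^2(\N)$. For hypothesis (a) I would use the normal form $T_n = E_n\, \one + T_0\, Z_n(T_0)$ in which $E_n \in \C$ is a scalar shift and $Z_n$ is a continuous function on $[0,1]$ close to the constant $1$; the same contraction analysis provides $|E_n| \leq C\xi^n$ and $\|Z_n - 1\|_\infty \leq C\xi^n$. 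These two bounds together give $\|T_n\psi\| \geq (1 - C\xi^n)\|T_0\psi\| - C\xi^n\|\psi\|$, which yields (a) with $\delta_0 = 1/2$ for all $n$ after the first few, and for the finitely many remaining $n$ one simply enlarges $a_n$ (still keeping it $\ell^2$, which is possible because $T_n$ is bounded on $\mathcal H_{\rm red}$).

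The main obstacle is not the abstract mechanism but the verification that the representation $T_n = E_n\, \one + T_0\, Z_n(T_0)$ with the stated quantitative bounds is preserved through the iteration for each of the three concrete models, including the initial step where one must absorb the $A_\Lambda^2$ term in $H^{(1)}$ and the quadratic vector-potential contributions generated by Feshbach reduction in $H^{(2)}$. For $H^{(3)}$ the required estimates are in \cite{HH10-1}; for $H^{(1)}$ they are contained in \cite{HH10-2, HH10-3}; for $H^{(2)}$ they follow by the same argument with minor simplifications, as noted earlier. Once (a) and (b) are established, Theorem \ref{thm:abstract} gives $\dim \ker H_1 \leq 1$, and the injectivity of $\chi_\rho(T_0)$ on kernels (together with the isospectrality of the initial Feshbach transformations, cf.\ \cite{BCFS03, GH08}) identifies $\ker H_1$ with the ground state eigenspace of $H - E_0$. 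This provides non-degeneracy and completes the proof.
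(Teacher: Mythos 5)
Your proposal is correct and follows essentially the same route as the paper: apply Theorem \ref{thm:abstract} with $m=1$, $T_0=H_f$, verifying (b) from the geometric decay $\|W_n\|\leq \epsilon_0 2^{-n}$ of the RG iteration and (a) from the structure of $T_n$ as a near-identity $C^1$ function of $H_f$ plus a geometrically small constant. Two minor points of divergence: the contraction estimates actually give only a uniform, non-decaying bound $|\partial_r T_n(r)-1|\leq\epsilon_0$ rather than your claimed $\|Z_n-1\|_\infty\leq C\xi^n$ (harmless, since hypothesis (a) needs only a fixed $\delta_0>0$), and the paper additionally devotes a paragraph to justifying the recursion \eqref{eqn:renorm} via the composition law $z_n=J_n^{-1}(z_{n+1})$ for the spectral parameters arising from the conformal maps of \cite{HH10-1}, a bookkeeping step your sketch elides.
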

\begin{proof}
Referring to \cite{HH10-2} and \cite{HH10-1} after one or two Feshbach transformations the operator $H^{(j)} - E^{(j)}_0$ transforms to an operator unitarily equivalent to an operator $H_1 = T_1 + W_1$ acting in $\mathcal H_{\rm red}: = 1_{[0,1]}(H_f)\mathcal F$.  The null spaces of the operator and its Feshbach transform, $H_1$, are isomorphic.  Here $E^{(j)}_0$ is the ground state energy of $H^{(j)}$. We have omitted superscripts.  $T_1$ is a real $C^1$ function of $H_f$.  Thus referring to Theorem \ref{thm:abstract}, $\mathcal H =  \mathcal F$ and $T_0 = H_f$.  The unitary scale transformation $\Gamma_{\rho}$ leaves the vacuum invariant and acts on the usual creation operators as
$$ \Gamma_{\rho}a^*(k)\Gamma_{\rho}^{-1} = \rho^{-3/2}a^*(\rho^{-1}k)$$
We have omitted polarization indices for $H^{(1)}$ and $H^{(2)}$.
The functions $\chi$ and $\chib$ are assumed to be real and in $C^{\infty}(\R)$ with $\chi$ monotone,
\begin{eqnarray}
\chi^2 + \chib^2 = 1, \quad \supp \chi \subset (-\infty,1] , \quad  {\rm and} \ \ \chi(x) = 1 \ \
\textrm{if}   \ \ x \in [0,3/4].  \label{condchi}
\end{eqnarray}
Thus in Theorem \ref{thm:abstract}, $a=3/4$.
In \cite{HH10-1} ,\! \cite{HH10-2}, \!and \cite{HH10-3}, omitting polarization indices for $H^{(1)}$ and $H^{(2)}$, $W_n$ is given by
 \begin{eqnarray} \nonumber
\lefteqn{  W_n = } \\
&&\sum_{l+m =1}^{\infty}\int  \prod_{j=1}^l a^*(k_j)  w_{l,m}^{(n)}[z_n ; H_f; k_1,\cdots,k_l,\widetilde{k}_1,\cdots,\widetilde{k}_m]
 \prod_{i=1}^m a(\widetilde{k}_j) \frac{dk_1}{|k_1|^{1/2}}\cdots \frac{d \tilde{k}_m}{|\tilde{k}_m|^{1/2}}.\nonumber
 \end{eqnarray}
The kernels $w_{l,m}^{(n)}[z_n; r ; k_1,\cdots,k_l, \tilde{k}_1,\cdots,\tilde{k}_m]$ are such that $\|W_n\| \leq \epsilon_0/ 2^n$
(so that $W_n$ satisfies (b) of Theorem \ref{thm:abstract}). The quantity $z_n$ is a real spectral parameter with $|z_n| < c_1/2^n$.  The operators $H_n$ are defined inductively by \eqref{eqn:renorm} where $\rho$ is chosen suitably small.  $T_n$ is a real $C^1$ function of $H_f$ satisfying (for $r \in [0,1]$)
\begin{align}
|(\partial_r T_n(r) -1)| & \leq \epsilon_0 \label{eq:estonT1} \\
|T_n(0)+ z_n| & \leq \epsilon_0/2^n \label{eq:estonT2}
\end{align}
Here $\epsilon_0$ is a small positive number.
It follows that
\begin{equation} \label{eq:estonT3}
T_n(r) = (r + \int_0^r(\partial_s T_n(s) - 1)d s)) +(T_n(0) + z_n) - z_n
\end{equation}
so that
\begin{equation} \label{eq:estonT4}
 \|T_n(H_f)\psi\| \geq (1 - \epsilon_0)\|H_f \psi\| - 2\epsilon_0 2^{-n}\|\psi\|
 \end{equation}
Thus $T_n$ satisfies hypothesis (a) of Theorem \ref{thm:abstract}.  The smallness of $\epsilon_0$ and the bounds satisfied by $T_n$ and $W_n$ require $|g|$ to be small and the proof of these bounds requires $\rho$ to be small, in particular $\rho < 3/4$.

In view of the conformal transformation which is part of the renormalization group analysis of \cite{HH10-1} we make some remarks to justify \eqref{eqn:renorm}. In \cite{HH10-1} operator functions of a complex variable, $\widetilde H_n(z) = \widetilde T_n(z) + \widetilde W_n(z)$ are a crucial part of the analysis and the Hamiltonians $H_n$ are given by evaluating  $\widetilde H_n(z)$ at the real spectral parameter $z_n$, $H_n = \widetilde H_n(z_n)$ where $z_n$ is given by a certain limit.  We have
\begin{equation}\label{eqn:compositionlaw}
z_{n} = J_n^{-1}(z_{n+1})
\end{equation}
where $J_n$ is a conformal transformation (see \cite{HH10-1} for further definitions and details). We have
$$
\widetilde H_{n+1}(J_n(z)) = \rho^{-1}\Gamma_{\rho}\widetilde F_n(z)\Gamma_{\rho}^{-1}
$$
where $\widetilde F_n(z)$ is given by \eqref{eqn:1} and \eqref{eqn:2} with $H_n$ replaced by $\widetilde H_n(z)$ and similarly for $T_n$ and $W_n$.  Substituting $z=z_n$ and noting the composition law \eqref{eqn:compositionlaw}, we see that $H_{n+1}$ and $H_n$ are indeed related by \eqref{eqn:renorm}.

\end{proof}

\appendix

\section{Appendix}
\label{sec:appendix}

In this appendix we show that  Theorem   \ref{thm:abstract} can be used to obtain uniqueness in the framework of operator theoretic renormalization as defined in  \cite{BCFS03}.
To this end we recall the setup  of  \cite{BCFS03}. Let $I=[0,1]$,
 let $B_1$ denote the unit ball in  $\R^d$, and let $D_{1/2} := \{ z \in \C | |z| \leq  1/2 \}$.
 Here the Fock space $\FF$ is over the space of square integrable functions on $\R^d$.
 Let $\WW_{m,n}^\#$ denote the Banach space of
functions $B_1^m\times B^n_1 \to C^1(I)$, which
are a.e. defined, totally symmetric under the interchange of components in $B_1^m$ respectively  $B_1^n$, and which satisfy
the norm bound
$$
\| w_{m,n} \|_\mu^\# := \| w_{m,n} \|_\mu + \| \partial_r w_{m,n} \|_\mu < \infty  ,
$$
where
$$
\| w_{m,n} \|_\mu := \left[  \int_{B_1^{m+n}}
\sup_{r \in I} \left| w_{m,n}[r ; k_1,...,k_m, \tilde{k}_1,..., \tilde{k}_n ] \right|^2 \prod_{i=1}^m \frac{d^d k_i}{|k_i|^{3 + 2 \mu}}
\prod_{j=1}^m \frac{d^d \widetilde{k}_j}{|\widetilde{k}_j|^{3 + 2 \mu}} \right]^{1/2}
$$
for some $\mu > 0$. For $0 < \xi < 1$ one defines, $\WW_\xi^\# := \bigoplus_{m+n \geq 0}  \WW_{m,n}^\#$, to be the Banach space of sequences $\underline{w} = (w_{m,n})_{m+n \geq 1}$
obeying
$$
\| \underline{w} \|_{\mu,\xi}^\# := \sum_{m+n \geq 1} \xi^{-(m+n)} \| w_{m,n} \|_\mu^\# < \infty  .
$$
We define
\begin{equation*}
\begin{split}  W_{m,n}[w_{m,n}] :=  & {1}_{I}(H_f) \int_{B_1^{m+n}} \prod_{j=1}^m  a^*(k_j)
w_{m,n}[H_f; k_1,...,k_m, \widetilde{k}_1,..., \widetilde{k}_n ]   \prod_{l=1}^n  a(\widetilde{k}_l) \cr
& \times  \prod_{j=1}^m \frac{d^d k_j}{|k_j|^{1/2}}     \prod_{l=1}^n \frac{d^d \widetilde{k}_l}{|\widetilde{k}_l|^{1/2}}       \          1_{I}(H_f)      ,
\end{split}
\end{equation*}
where $a^*(k)$ and $a(k)$ denote the usual creation and annihilation operator.
In  \cite{BCFS03} it is shown that $H(\underline{w}) = \sum_{m+n \geq 0} W_{m,n}[w_{m,n}]$ defines
a bounded operator on $\HH_{\rm red} := 1_{I}(H_f)\FF$, with bound
 %(it is uniquely determined if $w_{m,n}(r,k^{(m)},k^{(n)}) = 0$ if  ${\rm max}( r + \sum_{i=1}^m |k_i|, r + \sum_{j=1}^n |\widetilde{k}_j | )  \leq 1$.
\begin{eqnarray} \label{eq:boundonhw}
 \| H(\underline{w})  \| \leq \| \underline{w} \|_{\mu,\xi}^\# .
\end{eqnarray}
We define   the polydics $\mathcal{D}(\epsilon, \delta )$ to consist of the analytic\footnote{Analytic on the closed set $D_{1/2}$
 means that the function is analytic in an open  neighborhood of $D_{1/2}$.}
 functions
$\underline{w}[\cdot ] : D_{1/2} \to \WW_{\xi}^\#$, with
\begin{align*}
&\sup_{z \in D_{1/2}} \sup_{ r \in I} | \partial_r w_{0,0}[z;r] - 1 | \leq \epsilon , \\
&\sup_{z \in D_{1/2}} | w_{0,0}[z;0]   +  z | \leq \delta  , \quad   \sup_{z \in D_{1/2}} \| (w_{m,n}[z])_{m+n \geq 1} \|^\#_{\mu,\xi} \leq \delta   .
\end{align*}
Let $0 < \rho < 1/2$, and assume $\underline{w} \in \mathcal{D}(\rho/8,\rho/8)$.
In \cite{BCFS03} it  is shown that  the  following map
is biholomorphic
$$
E_{\rho}[\cdot] : \UU[w] \to D_{1/2} , \quad z \mapsto   - \rho^{-1}w_{0,0}[z;0] ,
$$
where $\UU[w] := \{ z \in D_{1/2} | | w_{0,0}[z;0] | \leq \rho/2 \}$.
Assume  $\chi, \chib \in C^\infty(\R;[0,1])$  satisfy \eqref{condchi} and that $\chi$ is monotone.  Define $\chi_\rho := \chi(\rho^{-1} H_f )$ and
 $\chib_\rho := \chib(\rho^{-1} H_f )$.
Let  $\zeta \in D_{1/2}$, and set   $$T(r) := {w}_{0,0}[E_\rho^{-1}(\zeta);r ], \qquad
W := \sum_{m+n \geq 1} W_{m,n}\left[{w}_{m,n}[E_\rho^{-1}(\zeta)]\right].$$ Then, as shown in  \cite{BCFS03},
the map  $H_{\chib_\rho} := T(H_f) + \chib_\rho W \chib_\rho $ is bounded invertible on the range of $\chib_\rho$
and there exists a unique so called renormalized kernel,
$\mathcal{R}_\rho(\underline{w})$,  such that
\begin{equation} \label{eq:defofrg}
H ( \mathcal{R}_\rho(w)[\zeta]) = \frac{1}{\rho} \Gamma_\rho \left(  T(H_f) + \chi_\rho W \chi_\rho - \chi_\rho W \chib_\rho H_{\chib_\rho}^{-1} \chib_\rho W \chi_\rho  \right) \Gamma_\rho^* .
\end{equation}
Furthermore, for fixed $\mu > 0$, there exist  $\rho, \xi, \epsilon_0 > 0$ such that
\begin{equation} \label{eq:contr}
\mathcal{R}_\rho : \mathcal{D}(\epsilon, \delta) \to \mathcal{D}(\epsilon + \delta/2, \delta/2 )
\end{equation}
for all $ \epsilon, \delta \in [0, \epsilon_0]$.
The existence part of the following theorem has been proven in   \cite[Theorem 3.12]{BCFS03}, i.e.,
with  assertion  l.h.s. \eqref{eq:equality} $\geq$ r.h.s.  \eqref{eq:equality}. Using Theorem \ref{thm:abstract}
one can show that one has uniqueness, i.e.,  l.h.s. \eqref{eq:equality} $\leq$ r.h.s.  \eqref{eq:equality}.
\begin{theorem} \label{thm:bcfs}
Fix $\mu > 0$, and choose $\rho, \xi, \epsilon_0 > 0$ sufficiently small such that \eqref{eq:contr}
holds for all $0 \leq \epsilon \leq \epsilon_0$ and $0 \leq \delta \leq \epsilon_0$. Suppose that $\underline{w} \in \mathcal{D}(\epsilon_0/2,\epsilon_0/2)$. Then
the complex number $e_{(0,\infty)} \in D_{1/2}$ defined in  \eqref{eq:defofenergy} is an eigenvalue of $H(\underline{w})$, in the sense that
\begin{equation} \label{eq:equality}
{\rm dim} \ker \left\{ H(\underline{w}[e_{(0,\infty)}])  \right\} = 1  .
\end{equation}
\end{theorem}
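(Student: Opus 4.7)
The plan is to apply Theorem~\ref{thm:abstract} with $\HH = \FF$, $T_0 = H_f$, $m = 1$ (the vacuum $\Omega$ is, up to scalar, the unique zero eigenvector of $H_f$), and $\Gamma_\rho$ the usual Fock dilation $\Gamma_\rho a^*(k)\Gamma_\rho^{-1} = \rho^{-d/2}a^*(\rho^{-1}k)$, which satisfies $\Gamma_\rho H_f\Gamma_\rho^{-1} = \rho H_f$. Theorem~\ref{thm:abstract} will then yield $\dim\ker H_1 \leq 1 = m$, and combined with the existence half already proved in \cite{BCFS03} this gives the equality \eqref{eq:equality}.

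I would first iterate the renormalization map. Set $\underline{w}^{(0)} := \underline{w} \in \DD(\epsilon_0/2, \epsilon_0/2)$ and $\underline{w}^{(n)} := \mathcal{R}_\rho(\underline{w}^{(n-1)})$. Inductive application of \eqref{eq:contr} gives $\underline{w}^{(n)} \in \DD(\epsilon^{(n)}, \delta^{(n)})$ with $\delta^{(n)} = \epsilon_0 2^{-n-1}$ (geometric) and $\epsilon^{(n)} \leq \epsilon_0$. The construction of $e_{(0,\infty)}$ in \cite{BCFS03} simultaneously produces a sequence $\{\zeta^{(n)}\} \subset D_{1/2}$ with $\zeta^{(0)} = e_{(0,\infty)}$ and $\zeta^{(n)} = E_\rho[\underline{w}^{(n-1)}](\zeta^{(n-1)}) = -\rho^{-1} w^{(n-1)}_{0,0}[\zeta^{(n-1)}; 0]$. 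Put
\[
H_n := H(\underline{w}^{(n-1)}[\zeta^{(n-1)}]), \qquad T_n := w^{(n-1)}_{0,0}[\zeta^{(n-1)}; H_f], \qquad W_n := H_n - T_n,
\]
so that $T_n$ is a scalar function of $H_f$ (hence commutes with $T_0$), and \eqref{eq:defofrg} together with the composition law above reproduces exactly the renormalization relation \eqref{eqn:renorm}.

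Hypothesis~(b) follows immediately from the operator norm bound \eqref{eq:boundonhw} applied to the subseries $m+k \geq 1$, giving $\|W_n\| \leq \delta^{(n-1)} \leq \epsilon_0 2^{-n}$, which is in $\ell^2(\N)$. For hypothesis~(a), use the polydisc bounds $|\partial_r T_n - 1| \leq \epsilon^{(n-1)} \leq \epsilon_0$ and $|T_n(0) + \zeta^{(n-1)}| \leq \delta^{(n-1)}$, together with functional calculus applied to $T_n(r) - r$, to obtain
\[
\|T_n(H_f)\psi\| \geq (1-\epsilon_0)\|H_f\psi\| - (\delta^{(n-1)} + |\zeta^{(n-1)}|)\|\psi\|.
\]
The last ingredient is a geometric bound $|\zeta^{(n)}| \leq C\delta^{(n)}$: from $|\zeta^{(n)} - \rho^{-1}\zeta^{(n-1)}| \leq \rho^{-1}\delta^{(n-1)}$, i.e.\ $|\zeta^{(n-1)}| \leq \rho|\zeta^{(n)}| + \delta^{(n-1)}$, iterating forward and using $|\zeta^{(n)}| \leq 1/2$ gives $|\zeta^{(n)}| \leq (1 - \rho/2)^{-1}\delta^{(n)}$. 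Hence hypothesis~(a) holds with $\delta_0 := 1 - \epsilon_0 > 0$ and $a_n := (1+C)\delta^{(n-1)} \in \ell^2(\N)$.

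The main obstacle is not a single hard estimate but the bookkeeping: aligning the renormalized kernels $\underline{w}^{(n)}$, the nested spectral parameters $\zeta^{(n)}$, and the biholomorphisms $E_\rho[\underline{w}^{(n)}]$ so that \eqref{eq:defofrg} reproduces \eqref{eqn:renorm} and the smallness from $\eqref{eq:contr}$ transfers cleanly to hypotheses~(a), (b). The one ingredient not explicit in the appendix is the geometric bound on $|\zeta^{(n)}|$, which is extracted from the BCFS construction of $e_{(0,\infty)}$ as above. With this in place Theorem~\ref{thm:abstract} yields $\dim\ker H_1 \leq 1$, completing the proof.
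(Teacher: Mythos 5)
Your proposal is correct and follows essentially the same route as the paper's proof: iterate $\mathcal{R}_\rho$ using \eqref{eq:contr} to get $\underline{w}^{(n)}\in\mathcal{D}(\epsilon_0,2^{-n}\epsilon_0)$, align the nested spectral parameters via the composition law so that \eqref{eq:defofrg} yields \eqref{eqn:renorm}, verify hypotheses (a) and (b) of Theorem~\ref{thm:abstract} from the polydisc bounds and \eqref{eq:boundonhw}, and conclude $\dim\ker H(\underline{w}[e_{(0,\infty)}])\leq 1$. Your explicit geometric bound $|\zeta^{(n)}|\leq(1-\rho/2)^{-1}\delta^{(n)}$ is a detail the paper leaves implicit (it is subsumed in the bound $|z_n|<c_1 2^{-n}$ quoted from the BCFS construction and used in \eqref{eq:estonT2}--\eqref{eq:estonT4}), but it is the same argument.
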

\begin{proof} In view of  \cite[Theorem 3.12]{BCFS03} it remains to show  $ {\rm dim} \ker \left\{ H(\underline{w}[e_{(0,\infty)}] \right\} \leq 1 $.
To this end we combine the  proof given in \cite{BCFS03} with
 Theorem  \ref{thm:abstract}.
The contraction property  \eqref{eq:contr} allows  to iterate the renormalization transformation, and hence for $n \in \N_0$ the kernels
$ \underline{w}^{(n)} := \mathcal{R}_\rho^n(\underline{w})$ satisfy
\begin{equation}  \label{eq:contr2}
 \underline{w}^{(n)}  \in \mathcal{D}(\epsilon_0 , 2^{-n} \epsilon_0) .
\end{equation}
We define $E_n[z] := w^{(n)}_{0,0}[z;0]$, $J_n[z] = \rho^{-1} E_n[z]$, and for $n  \leq m \in \N_0$
$$
e_{(n,m)} := J^{-1}_{(n)} \circ \cdots \circ J^{-1}_{(m)}[0] .
$$
Using  property  \eqref{eq:contr2} it was shown in \cite{BCFS03} (c.f. Eq. (3.146)) that the following limit exits
\begin{equation} \label{eq:defofenergy}
z_n := e_{(n,\infty)} := \lim_{m \to \infty} e_{(n,m)} .
\end{equation}
It follows from the definition of the renormalization transformation  \eqref{eq:defofrg} that
$$
H_{n}  =  \frac{1}{\rho} \Gamma_\rho  F_{n-1} \Gamma_\rho^* ,
$$
where
\begin{eqnarray*}
H_{n} & := & H(\underline{w}^{(n)}[e_{(n,\infty)} ]) ,   \\
T_n(r) & := & w^{(n)}_{0,0}[e_{(n,\infty)};r]  , \quad
W_n  :=   \sum_{m+l \geq 1} W_{m,l}\big[{w}^{(n)}_{m,l}[e_{(n,\infty)} ] \big]     , \\
F_{n} & := &       T_n(H_f)  + \chi_\rho W_n \chi_\rho - \chi_\rho W_n \chib_\rho \left( T_n(H_f) + \chib_\rho W_n \chib_\rho  \right)^{-1} \chib_\rho W_n \chi_\rho          .
\end{eqnarray*}
It follows from \eqref{eq:contr2}, that  \eqref{eq:estonT1} and  \eqref{eq:estonT2} hold. Then using  the decomposition
\eqref{eq:estonT3} we find  \eqref{eq:estonT4}.
Thus $T_n(H_f)$ satisfies Hypothesis (a) of Theorem \ref{thm:abstract} with $T_0 =H_f$. Now Hypthesis (b) of Theorem \ref{thm:abstract} follows from
\begin{equation} \label{eq:estonWn}
\| W_n  \| \leq \epsilon_0 2^{-n} .
\end{equation}
Ineq.  \eqref{eq:estonWn}   can be shown using    \eqref{eq:boundonhw} and  \eqref{eq:contr2}.
 The theorem now follows as a consequence of    Theorem  \ref{thm:abstract}.
%, see   \cite{BCFS03} for details (c.f. Eq. (3.143)).
\end{proof}

\bibliography{qed}

\bibliographystyle{plain}

\end{document}